\newif\iffull\fulltrue
\let\doendproof\endproof
\renewcommand\endproof{~\hfill\qed\doendproof}
\newcommand{\crn}{\operatorname{cr}}
\newcommand{\paircr}{\operatorname{cr}_{\operatorname{pair}}}
\title{The Effect of Planarization on Width}
\author{David Eppstein\thanks{Supported in part by the National Science Foundation under Grants CCF-1228639, CCF-1618301, and CCF-1616248. The author is grateful to Glencora Borradaile, Erin Chambers, and Amir Nayyeri for discussions that helped clarify the distinctions between some of the width parameters considered here.}}
\institute{Department of Computer Science, University of California, Irvine}
\begin{document}
\maketitle

\begin{abstract}
We study the effects of planarization (the construction of a planar diagram $D$ from a non-planar graph $G$ by replacing each crossing by a new vertex) on graph width parameters.
We show that for treewidth, pathwidth, branchwidth, clique-width, and tree-depth there exists a family of $n$-vertex graphs with bounded parameter value, all of whose planarizations have parameter value $\Omega(n)$.
However, for bandwidth, cutwidth, and carving width, every graph with bounded parameter value has a planarization of linear size whose parameter value remains bounded. The same is true for the treewidth, pathwidth, and branchwidth of graphs of bounded degree.
\end{abstract}

\section{Introduction}

Planarization is a graph transformation, standard in graph drawing, in which a given graph $G$ is
drawn in the plane with simple crossings of pairs of edges, and then each crossing of two edges in the drawing is replaced by a new dummy vertex, subdividing the two edges~\cite{Lei-FOCS-81,DiBDidMar-GD-01,BucChiGut-HGD-14,GarSha-JSL-71}. This should be distinguished from a different problem, also called planarization, in which we try to find a large planar subgraph of a nonplanar graph~\cite{JayThuSwa-TCAD-8,Cim-JIOS-97,ChuMakSid-SODA-11,BorEppZhu-JGAA-15}. A given graph $G$ may have many different planarizations, with different properties. Although the size of the planarization (equivalently the crossing number of $G$) is of primary importance in graph drawing, it is natural to ask what other properties can be transferred from $G$ to its planarizations.

One problem of this type arose in the work of Jansen and Wulms on the fixed-parameter tractability of graph optimization problems on graphs of bounded pathwidth~\cite{JanWul-IPEC-17}.
One of their constructions involved the planarization of a non-planar graph of bounded pathwidth, and they observed that the planarization maintained the low pathwidth of their graph.
Following this observation, Jansen asked on cstheory.stackexchange.com whether planarization preserves the property of having bounded pathwidth, and in particular whether $K_{3,n}$ (a graph of bounded pathwidth) has a bounded-pathwidth planarization.\footnote{See \url{https://cstheory.stackexchange.com/q/35974/95}.} This paper represents an extended response to this problem. We provide a negative answer to Jansen's question: planarizations of $K_{3,n}$ do not have bounded pathwidth. However, for bounded-degree graphs of bounded pathwidth, there always exists a planarization that maintains bounded pathwidth. More generally we study similar questions for many other standard graph width parameters.

Our work should be distinguished from a much earlier line of research on planarization and width, in which constraints on the width of planar graphs are transferred in the other direction, to information about the graph being planarized. In particular, Leighton~\cite{Lei-FOCS-81} used the facts that planar graphs have width at most proportional to the square root of their size, and that (for certain width parameters) planarization cannot decrease width, to show that when the original graph has high width it must have crossing number quadratic in its width. In our work, in contrast, we are assuming that the original graph has low width and we derive properties of its planarization from that assumption.

\subsection{Width parameters in graphs}

There has been a significant amount of research on graph width parameters and their algorithmic implications; see Ne{\v{s}}et{\v{r}}il and Ossona de Mendez~\cite{NesOss-SGSA-12} for a more detailed survey. We briefly describe the parameters that we use here.

\begin{figure}[t]
\centering\includegraphics[scale=0.5]{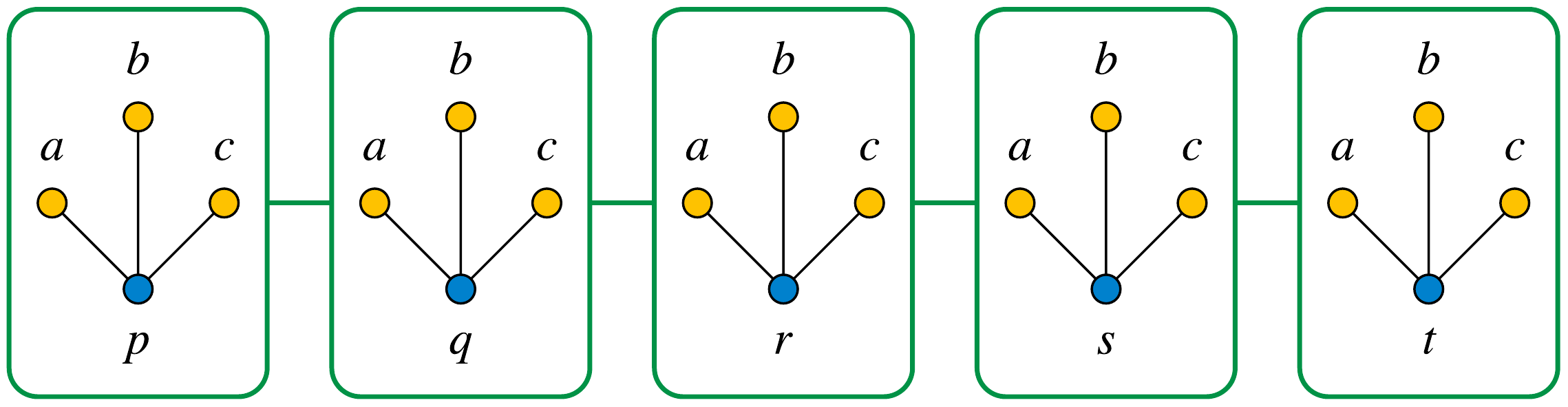}
\caption{A tree-decomposition and path-decomposition of $K_{3,5}$, with width three.
Vertices $a$, $b$, and $c$ (on one side of the bipartition) belong to all bags; vertices $p$, $q$, $r$, $s$, and $t$ (on the other side) are each in only one bag.}
\label{fig:pd-K-3-5}
\end{figure}

\begin{description}
\item[Treewidth.]  Treewidth has many equivalent definitions; the one we use is that the treewidth of a graph $G$ is the minimum width of a tree-decomposition of $G$~\cite{NesOss-SGSA-12}. Here, a tree-decomposition is a tree $T$ whose nodes, called \emph{bags}, are labeled by sets of vertices of $G$. Each vertex of $G$ must belong to the bags of a contiguous subtree of $T$, and for each edge of $G$ there must exist a bag containing both endpoints of the edge. The width of the decomposition is one less than the maximum cardinality of the bags. \autoref{fig:pd-K-3-5} shows such a decomposition for $K_{3,5}$.

\item[Pathwidth.] The pathwidth of a graph is the minimum width of a tree-decom\-po\-si\-tion whose tree is a path, as it is in \autoref{fig:pd-K-3-5}~\cite{NesOss-SGSA-12}.  Equivalently the pathwidth equals the minimum vertex separation number of a linear arrangement of the vertices of $G$ (an arrangement of the vertices into a linear sequence)~\cite{Kin-IPL-92}. Every linear arrangement of an $n$-vertex graph defines $n-1$ cuts, that is, $n-1$ partitions of the vertices into a prefix of the sequence and a disjoint suffix of the sequence. The vertex separation number of a linear arrangement is the maximum, over these cuts, of the number of vertices in the prefix that have a neighbor in the suffix.
From a linear arrangement one can construct a tree-decomposition in the form of a path, where the first bag on the path for each vertex $v$ contains $v$ together with all vertices that are earlier than $v$ in the arrangement but that have $v$ or a later vertex as a neighbor.

\item[Cutwidth.] The cutwidth of a graph equals the minimum edge separation number of a linear arrangement of the vertices of $G$~\cite{ChuSey-DM-89}. The edge separation number of a linear arrangement is the maximum, over the prefix--suffix cuts of the arrangement, of the number of edges that cross the cut.

\item[Bandwidth.] The bandwidth of a graph equals the minimum span of a linear arrangement of the vertices of $G$~\cite{ChuSey-DM-89}. The span of a linear arrangement is the maximum, over the edges of $G$, of the number of steps in the linear arrangement between the endpoints of the edge.

\item[Branchwidth.] A branch-decomposition of a graph $G$ is an undirected tree $T$, with leaves labeled by the edges of $G$, and with every interior  vertex of $T$ having degree three. Removing any edge $e$ from $T$ partitions $T$ into two subtrees; these subtrees partition the leaves of $T$ into two sets, and correspondingly partition the edges of $G$ into two subgraphs. The width of the decomposition is the maximum, over all edges $e$ of $T$, of the number of vertices that belong to both subgraphs. The branchwidth of $G$ is the minimum width of any branch-decomposition~\cite{SeyTho-Comb-94}.

\item[Carving width.] A carving decomposition of a graph $G$ is an undirected tree $T$, with leaves labeled by the vertices of $G$, and with every interior  vertex of $T$ having degree three. Removing any edge $e$ from $T$ partitions $T$ into two subtrees; these subtrees partition the leaves of $T$ into two sets, and correspondingly partition the vertices of $G$ into two induced subgraphs. The width of the decomposition is the maximum, over all edges $e$ of $T$, of the number of edges of $G$ that connect one of these subgraphs to the other. The carving width of $G$ is the minimum width of any carving decomposition~\cite{SeyTho-Comb-94}.
For instance, \autoref{fig:carve-K-3-3} depicts a carving decomposition of $K_{3,3}$ with width four, the minimum possible for this graph.

\begin{figure}[t]
\centering\includegraphics[scale=0.5]{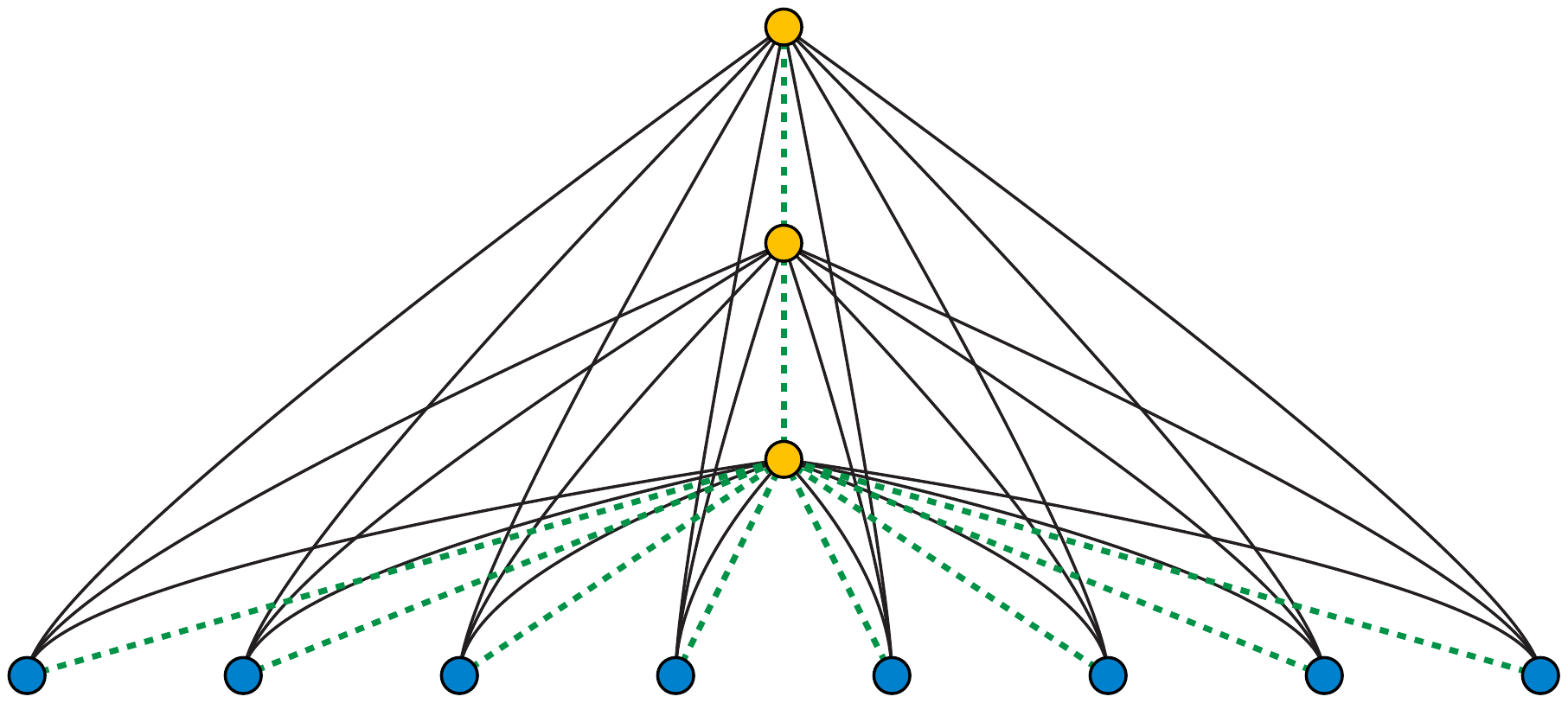}
\caption{$K_{3,8}$ has tree-depth three: The depth-three tree shown by the green dashed edges forms a depth-first search tree of a supergraph of $K_{3,8}$.}
\label{fig:td-K-3-8}
\end{figure}

\item[Tree-depth.] The tree-depth of $G$ is the minimum depth of a depth-first-search tree $T$ of a supergraph of $G$ (\autoref{fig:td-K-3-8}). Such a tree can be characterized more simply by the property that every edge of $G$ connects an ancestor--descendant pair in~$T$~\cite{NesOss-SGSA-12}.
\item[Clique-width.] A clique-construction of a graph $G$ is a process that constructs a vertex-colored copy of $G$ from smaller vertex-colored graphs by steps that create a new colored vertex, take the disjoint union of two colored graphs, add all edges from vertices of one color to vertices of another, or assigning a new color to vertices of a given color. The width of a clique-construction is the number of distinct colors it uses, and the clique-width of a graph is the minimum width of a clique-construction~\cite{CorRot-SICOMP-05}.
\end{description}

\subsection{New results}

In this paper, we consider for each of the depth parameters listed above how the parameter can change from a graph to its planarization, when the planarization is chosen to minimize the parameter value.
We show that for treewidth, pathwidth, branchwidth, tree-depth, and clique-width there exists a graph with bounded parameter value, all of whose planarizations have parameter value $\Omega(n)$. In each of these cases, the graph can be chosen as a complete bipartite graph $K_{3,n}$. (It was also  known that the planarizations of $K_{3,n}$ have quadratic size~\cite{Zar-FM-54}.)

However, for bandwidth, cutwidth, and carving width, every graph with bounded parameter value has a planarization of linear size whose parameter value remains bounded. The same is true for the treewidth, pathwidth, branchwidth, and clique-width of graphs of bounded degree. (Graphs of bounded degree and bounded tree-depth have bounded size, so this final case is not interesting.)

\section{Treewidth, branchwidth, pathwidth, tree-depth, and clique-width}

In this section we show that all planarizations of $K_{3,n}$ have high width.
We begin by computing the crossing number of $K_{3,n}$. This is a special case of Tur\'an's brick factory problem of determining the crossing number of all complete bipartite graphs.
For our results we need a variant of the crossing number, $\paircr(G)$, defined as the minimum number of pairs of crossing edges (allowing edges to cross each other multiple times, but only counting a single crossing in each case) instead of the usual crossing number $\crn(G)$
defined as the number of points where edges cross~\cite{PacTot-FOCS-98}.
We bound this number by an adaptation of an argument from Kleitman~\cite{Kle-JCT-70}, who credits it to Zarankiewicz~\cite{Zar-FM-54}.

\begin{lemma}
\label{lem:cr-K-3-n}
\[
\paircr(K_{3,n})=
\binom{\lfloor n/2\rfloor}{2}+
\binom{\lceil n/2\rceil}{2}=
\biggl\lfloor\frac{n}{2}\biggr\rfloor
\biggl\lfloor\frac{n-1}{2}\biggr\rfloor.
\]
\end{lemma}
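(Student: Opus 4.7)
The plan is to prove the stated identity by establishing matching upper and lower bounds on $\paircr(K_{3,n})$. For the upper bound I would describe the standard Zarankiewicz drawing: place the three degree-$n$ vertices $a,b,c$ collinearly on a vertical axis, split the $n$ degree-3 vertices into groups of sizes $\lfloor n/2\rfloor$ and $\lceil n/2\rceil$ on horizontal rays on either side of the vertical axis, and connect every degree-$n$ vertex to every degree-3 vertex by a straight segment. Grouping the crossings by which side of the vertical axis their degree-3 endpoints lie on yields exactly $\binom{\lfloor n/2\rfloor}{2}+\binom{\lceil n/2\rceil}{2}$ crossings, and in a straight-line drawing no pair of edges crosses more than once, so this number also upper-bounds $\paircr$.

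For the lower bound I would follow Kleitman's rotation-parity argument, adapted to the pair-crossing setting. In any drawing of $K_{3,n}$, each degree-3 vertex $v_i$ sees its three neighbors $a,b,c$ in one of two possible clockwise cyclic orders; label $v_i$ \emph{positive} if the order is $(a,b,c)$ and \emph{negative} otherwise, and let $k$ be the number of positive vertices. The technical core is the following sub-claim: in any drawing of the $K_{3,2}$ induced by $\{a,b,c,v_i,v_j\}$ in which $v_i$ and $v_j$ receive the same sign, at least one pair of non-adjacent edges must cross. Granting the sub-claim, each like-signed pair $\{v_i,v_j\}$ contributes at least one crossing pair, and because a non-adjacent edge pair of $K_{3,n}$ has its two degree-3 endpoints uniquely determined, the contributions from distinct pairs lie in disjoint sets of edge-pairs of $K_{3,n}$. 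Summing then gives $\paircr(K_{3,n})\ge \binom{k}{2}+\binom{n-k}{2}$, minimized at $k\in\{\lfloor n/2\rfloor,\lceil n/2\rceil\}$, which yields the stated lower bound.

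The main obstacle is the sub-claim. I would handle it by a short topological argument: any crossing-free embedding of $K_{3,2}$ has $F=3$ faces by Euler's formula, and the face-length accounting ($2E = 12$, every face-boundary of even length at least $4$) forces every face to be one of the three $4$-cycles $a\,v_i\,b\,v_j$, $b\,v_i\,c\,v_j$, $a\,v_i\,c\,v_j$. Reading off the face between each pair of consecutive edges at $v_i$ and comparing with the face between the corresponding consecutive edges at $v_j$, one sees that the clockwise rotations at $v_i$ and $v_j$ are reverses of one another. Hence matching rotations at $v_i$ and $v_j$ are incompatible with a crossing-free drawing of $K_{3,2}$, so at least one pair of edges must cross. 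Finally, equality of the binomial-sum form and the product form in the lemma statement is a routine check that splits on the parity of $n$.
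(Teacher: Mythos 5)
Your upper bound is the same Zarankiewicz drawing the paper uses, though you should specify that the three high-degree vertices are split two-and-one around the point where the horizontal rays meet the vertical axis: if all three lie on the same side of that point, each same-side pair of degree-$3$ vertices contributes $\binom{3}{2}=3$ crossings rather than one, and the count triples. The lower bound is where you diverge from the paper, and where there is a genuine gap. Your sub-claim --- that in \emph{any} drawing of the $K_{3,2}$ induced by $\{a,b,c,v_i,v_j\}$ with matching rotations at $v_i$ and $v_j$, some pair of \emph{non-adjacent} edges crosses --- is false. Your Euler-formula argument only shows that a \emph{crossing-free} drawing forces reversed rotations. But a rotation can be flipped at the cost of a single adjacent crossing: start from a planar embedding of $K_{3,2}$ (which has reversed rotations), and near $v_j$ swap the initial directions of the edges $v_jb$ and $v_jc$, letting them cross once before rejoining their original routes. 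The resulting drawing has matching rotations at $v_i$ and $v_j$, yet its only crossing pair is the adjacent pair $\{v_jb,v_jc\}$. This breaks your counting step in two ways: you are not guaranteed a non-adjacent crossing pair, and an adjacent pair such as $\{v_jb,v_jc\}$ determines only $v_j$, so it could be charged simultaneously to every like-signed pair containing $v_j$, destroying the disjointness you need to sum to $\binom{k}{2}+\binom{n-k}{2}$.

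The underlying issue is that the rotation-parity argument is normally run on \emph{good} drawings (no adjacent crossings, no two edges crossing twice), a normalization that is legitimate for the ordinary crossing number $\crn$ but not obviously for $\paircr$: the local swap that removes an adjacent crossing reassigns the remaining crossings between the two arcs and can thereby \emph{increase} the number of crossing pairs, so you cannot assume an optimal $\paircr$-drawing is good. Repairing your route appears to require heavier machinery, e.g.\ the strong Hanani--Tutte theorem with rotation systems, to pass from ``every independent pair crosses an even number of times'' to an embedding with the same rotations. The paper's proof avoids rotations entirely: it inducts on $n$ by finding two degree-$3$ vertices $v_{n-1},v_n$ whose edges form no crossing pair (otherwise there are already $\binom{n}{2}$ crossing pairs and we are done), deleting them, and charging one new crossing pair to each of the $n-2$ copies of $K_{3,3}$ containing both, using only $\paircr(K_{3,3})\ge 1$ (from non-planarity) plus a disjointness argument based on which of $v_{n-1}$, $v_n$, $v_i$ each crossing involves. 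That argument is robust to adjacent crossings and to edges crossing multiple times, which is exactly what the $\paircr$ setting demands.
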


\begin{figure}[t]
\centering\includegraphics[scale=0.5]{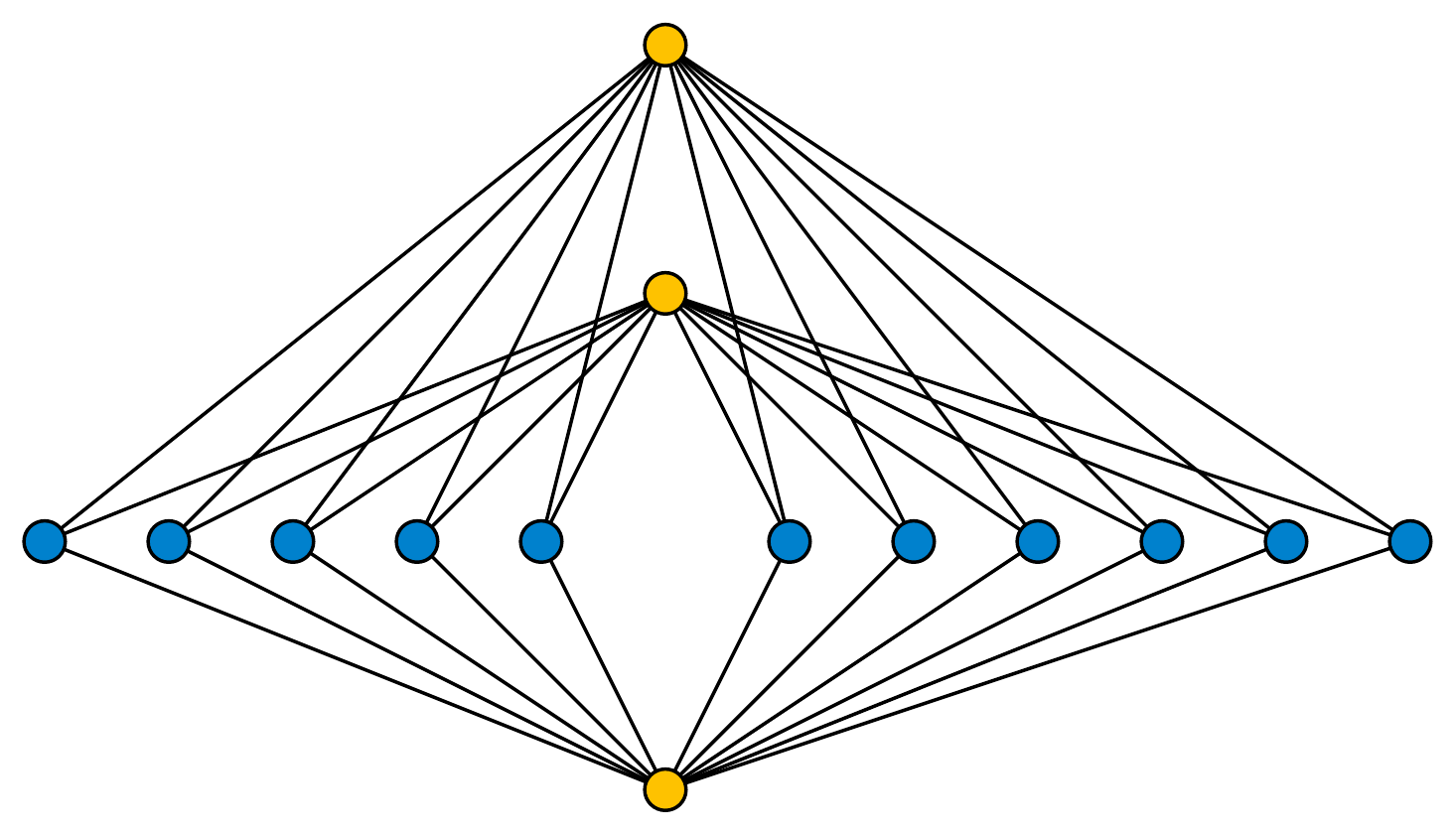}
\caption{A drawing of $K_{3,11}$ with $25$ crossings, the minimum possible for this graph.}
\label{fig:cr-K-3-11}
\end{figure}

\begin{proof}
To show that a drawing with this many crossing pairs exists, place the $n$ vertices on one side of the bipartition of $K_{3,n}$ along the $x$-axis, with $\lfloor n/2\rfloor$ on one side of the origin and $\lceil n/2\rceil$ on the other. Place the three vertices on the other side of the bipartition along the $y$-axis, with two points on one side of the origin and one on the other. Connect all of the pairs of points that have one point on each axis by a straight line segment, as shown in \autoref{fig:cr-K-3-11}. A straightforward calculation shows that the number of crossings is as claimed.

In the other direction, we know as base cases that $\paircr(K_{3,2})=0$ and $\paircr(K_{3,3})=1$. For any larger $n$, let the vertices of the $n$-vertex side of the bipartition of $K_{3,n}$ be $v_1,v_2,\dots v_n$. If every pair $v_i,v_j$ form the endpoints of at least one pair of crossing edges, then the total number of crossings is at least $\tbinom{n}{2}$, larger than the stated bound; otherwise, order the vertices so that $v_{n-1}$ and $v_n$ do not form the endpoints of any pair of crossing edges.

Then, in any drawing of $K_{3,n}$, the $K_{3,n-2}$ subgraph formed by deleting $v_{n-1}$ and $v_n$ has $\paircr(K_{3,n-2})$ crossings. Each of the $n-2$ $K_{3,3}$ subgraphs induced by $v_{n-1}$, $v_{n}$, exactly one other $v_i$, and the three vertices on the other side of the bipartition supplies at least one additional crossing, because $\paircr(K_{3,3})=1$. None of these subgraphs share any crossings, because the crossings in the $K_{3,n-2}$ subgraph involve neither $v_{n-1}$ nor $v_n$, while the crossings in the $K_{3,3}$ subgraph all involve exactly one of these two vertices and the one other vertex $v_i$ included in the subgraph. Therefore, we have that
\[
\paircr(K_{3,n}) \ge
\paircr(K_{3,n-2}) + (n-2)\paircr(K_{3,3}).
\]
The result follows by induction on $n$.
\end{proof}

This lemma shows that the \emph{crossing graph} of a drawing, with a vertex in the crossing graph for each edge of $K_{3,n}$ and an edge in the crossing graph for each crossing of the drawing, has constant density, in the following sense. We define the \emph{density} of a graph with $m$ edges and $n$ vertices to be $m/\tbinom{n}{2}$.
This is a number in the range $[0,1]$. For instance, the crossing graph of any planarization of $K_{3,n}$ has $3n$ vertices and (by \autoref{lem:cr-K-3-n}) at least $\bigl(1-o(1)\bigr)n^2/4$ edges, so its density is at least
\[
\bigl(1-o(1)\bigr) \frac{n^2}{4} \Bigm/ \binom{3n}{2} = \frac{1}{18}-o(1).
\]
To prove that planarizations of $K_{3,n}$ have high treewidth, we need higher densities than this, which we will achieve using the following ``densification lemma'':

\begin{lemma}
\label{lem:densification}
Let $G$ be a disconnected graph with $n$ vertices and $m$ edges, such that the $i$th connected component of $G$ has $n_i$ vertices and $m_i$ edges. Then there exists $i$ such that $m_i/n_i\ge m/n$.
\end{lemma}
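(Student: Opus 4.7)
The plan is to recognize this as an elementary weighted averaging statement: the overall ratio $m/n$ is a convex combination of the per-component ratios $m_i/n_i$ (weighted by $n_i/n$), so at least one summand must meet or exceed the average. Since every component of a graph has at least one vertex, each $n_i$ is positive and each ratio $m_i/n_i$ is well defined, which is the only regularity we need.

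Concretely, I would argue by contradiction. Suppose that $m_i/n_i < m/n$ for every component index $i$. Multiplying through by $n_i>0$ yields $m_i < (m/n)\,n_i$ for every $i$. Summing over all components and using $\sum_i m_i = m$ and $\sum_i n_i = n$ then gives
\[
m \;=\; \sum_i m_i \;<\; \frac{m}{n}\sum_i n_i \;=\; m,
\]
which is the desired contradiction. Hence some index $i$ satisfies $m_i/n_i\ge m/n$.

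The main (and only) subtlety worth mentioning is that the hypothesis that $G$ is disconnected plays no essential role in the argument beyond setting the context: the inequality $m_i/n_i\ge m/n$ holds trivially with equality when there is only one component, and the averaging step goes through unchanged regardless of how many components there are. So there is no real obstacle; the proof is a one-line averaging computation, and I would present it exactly as above.
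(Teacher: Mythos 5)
Your proof is correct and follows essentially the same averaging argument as the paper, which writes $m/n=\sum_i (n_i/n)(m_i/n_i)$ as a convex combination and observes that a convex combination cannot exceed its maximum term; your contradiction form is just the contrapositive phrasing of the same computation.
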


\begin{proof}
We can represent $m/n$ as a convex combination of the corresponding quantities in the subgraphs:
\[
\frac{m}{n}=\sum_i \frac{n_i}{n}\cdot\frac{m_i}{n_i}.
\]
The result follows from the fact that a convex combination of numbers cannot exceed the maximum of the numbers.
\end{proof}

Given a tree-decomposition $T$ of a drawing $D$ of $K_{3,n}$, and any connected subtree $S$ of $T$, we define a crossing graph $C_S$ as follows. Let $E_S$ be the subset of edges of $K_{3,n}$ with the property that, for each edge $e$ in $E_S$, the only bags of $T$ that contain crossings on $e$ are the bags in $S$.  (We do \emph{not} require the two endpoints of $e$ in $K_{3,n}$ to belong to these bags.) Then $C_S$ is a graph having $E_S$ as its vertex set, and having an edge for each pair of edges in $E_S$ that cross in~$D$. For instance, $C_T$ is the crossing graph of the whole drawing, as defined earlier. We will use \autoref{lem:densification} to find subtrees $S$ whose graphs $C_S$ are more dense (relative to their numbers of vertices) than $C_T$. To do so, we use the separation properties of trees:

\begin{lemma}
\label{lem:td-separator}
Let $T$ be a tree-decomposition $T$ of a drawing $D$ of $K_{3,n}$,
and suppose that $T$ has width $w$.
Let $S$ be a subtree of $T$ such that $C_S$ has $n_S$ vertices and $m_S$ edges.
Then there exists a bag $B\in S$ with the following properties:
\begin{itemize}
\item The removal of $B$ disconnects $S$ into subtrees $S_i$.
\item For subtree $S_i$, the corresponding crossing graph $C_{S_i}$ has at most $n_S/2$ vertices.
\item The total number of edges in all of the crossing graphs $C_{S_i}$ for all of the subtrees $S_i$
is at least $S-2(w+1)(n-2)$.
\end{itemize}
\end{lemma}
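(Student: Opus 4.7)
The plan is to locate $B$ by a weighted tree-centroid argument on $S$ and then charge every lost crossing to a dummy vertex of $B$.

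First, for each $e \in E_S$ I would observe that the ``support'' $\sigma(e) \subseteq S$ (the set of bags of $T$ that contain some dummy vertex lying on $e$) is a connected subtree of $S$. The planarization of $e$ is a path in $D$; consecutive vertices of this path are adjacent in $D$ and so must co-occur in some bag of $T$, so the bag-subtrees assigned to consecutive dummy vertices of $e$ meet, and their union $\sigma(e)$ is therefore connected. Now pick an arbitrary representative bag $f(e) \in \sigma(e)$ for each $e \in E_S$ and assign each bag $B' \in S$ the weight $|f^{-1}(B')|$; the total weight is $n_S$. The classical tree-centroid lemma then produces a bag $B \in S$ such that every connected component $S_i$ of $S \setminus \{B\}$ has total weight at most $n_S/2$. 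For any $e \in E_{S_i}$ we have $\sigma(e) \subseteq S_i$ and in particular $f(e) \in S_i$, so $|V(C_{S_i})| = |E_{S_i}| \le |f^{-1}(S_i)| \le n_S/2$, which establishes the first two bullet points.

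For the edge count I would classify each crossing of $D$ (edge of $C_S$), between some $e_1, e_2 \in E_S$ at a dummy vertex $v'$, as kept or lost. The crossing is kept precisely when $B \notin \sigma(e_1) \cup \sigma(e_2)$: in that case $\sigma(e_1)$ and $\sigma(e_2)$ each lie in a single component of $S \setminus \{B\}$, and since $v' \in \sigma(e_1) \cap \sigma(e_2)$ with $v' \ne B$, both must lie in the same component $S_i$, so the crossing appears in $C_{S_i}$. Conversely, every lost crossing involves some edge $e$ with $B \in \sigma(e)$, i.e.\ an edge of $K_{3,n}$ that threads through some dummy vertex in $B$. Since $B$ has at most $w+1$ vertices, it contains at most $w+1$ dummy vertices; each dummy vertex lies on exactly two edges of $K_{3,n}$, so at most $2(w+1)$ such ``bad'' edges exist. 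Each edge of $K_{3,n}$ participates in at most a linear (in $n$) number of crossings, namely at most the number of edges of $K_{3,n}$ non-adjacent to it, which combined with the previous bound yields a bound of the claimed shape on the number of lost crossings.

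The main obstacle I anticipate is pinning down the precise constant $2(w+1)(n-2)$: a direct application of the two bounds above gives something of order $(w+1)n$, but matching the exact coefficient calls for a more careful per-dummy-vertex accounting that exploits the bipartite structure of $K_{3,n}$ and avoids double-counting crossings supported on two bad edges at once. Conceptually, however, the proof reduces to three routine ingredients: the connectedness of $\sigma(e)$, the weighted tree-centroid lemma, and the observation that the only crossings that can fail to survive into some $C_{S_i}$ are precisely those supported on edges that actually pass through bag $B$.
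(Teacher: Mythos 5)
Your proof follows essentially the same route as the paper's: a centroid-type choice of $B$ (the paper finds it by repeatedly moving $B$ toward the unique violating neighbour, which is the same argument as your weighted tree-centroid step) followed by charging every lost edge of $C_S$ to one of the at most $w+1$ dummy vertices contained in $B$; your explicit check that the support $\sigma(e)$ of each edge is a connected subtree is a detail the paper leaves implicit. On the constant you worry about: the paper simply asserts that each of the two edges of $K_{3,n}$ meeting at a dummy vertex of $B$ has at most $n-2$ other crossings, yielding $2(w+1)(n-2)$, which is no more careful than your accounting (a priori an edge of $K_{3,n}$ can be crossed by up to $2(n-1)$ others), and since the lemma is only ever invoked with this term read as $m_S-O(wn)$, the discrepancy is immaterial.
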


\begin{proof}
We choose $B$ arbitrarily, and then as long as it fails to meet the condition on the number of vertices in the graphs $C_{S_i}$ we move $B$ to the (unique) adjacent bag in which this condition is not met.
After the move, the subtree containing the former location of $B$ has at most $n_S/2$ vertices in $C_{S_i}$, because these vertices are disjoint from the ones in the large subtree before the move.
Moving $B$ also cannot increase the numbers of vertices in the crossing graphs of the subtrees formed from the partition of the large subtree, and it reduces the numbers of bags in those subtrees.
Therefore, this process must eventually terminate at a choice of $B$ for which all crossing graphs have the stated number of vertices.

An edge~$e$ in $C_S$ (representing a crossing between two edges $f$ and $f'$ of $K_{3,n}$) will belong to one of the $C_{S_i}$ unless $B$ contains a crossing point on~$f$ or on~$f'$. $B$ may contain at most $w+1$ crossings of $D$, and each may eliminate at most $2(n-2)$ edges of $C_S$ (if it is a crossing of two edges in $E_S$ and each has $n-2$ other crossings). Therefore, the total number of edges in all of the crossing graphs $C_{S_i}$ for all of the subtrees $S_i$ is as stated.
\end{proof}

\begin{theorem}
\label{thm:planarize-K-3-n}
Every planarization of $K_{3,n}$ has treewidth $\Omega(n)$.
\end{theorem}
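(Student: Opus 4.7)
The plan is to exploit the three ingredients already in place --- Lemma~\ref{lem:cr-K-3-n}, giving $\paircr(K_{3,n})=\Theta(n^2)$; Lemma~\ref{lem:densification}, extracting a component of at least average edge-to-vertex ratio; and Lemma~\ref{lem:td-separator}, providing a balanced separator bag of the decomposition that destroys only $O(wn)$ edges of the crossing graph --- through an iterative descent. Suppose for contradiction that $T$ is a tree-decomposition of width $w=o(n)$ of a planarization of $K_{3,n}$. I will construct a shrinking chain of subtrees $T=S_0\supseteq S_1\supseteq\cdots$ whose crossing graphs $C_{S_k}$ have $n_k$ vertices, $m_k$ edges, and ratio $d_k=m_k/n_k$, with $n_{k+1}\le n_k/2$ at each step while $d_k$ decays only slowly, and then derive a contradiction from the fact that $d_k\le(n_k-1)/2$ for any simple graph.

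The base case is essentially free: every pair of edges of $K_{3,n}$ that cross in the drawing meet at a dummy vertex of the planarization that lies in some bag of $T$, so $C_T$ has $n_0=3n$ vertices and, by Lemma~\ref{lem:cr-K-3-n}, $m_0\ge(1-o(1))n^2/4$ edges, giving $d_0=\Omega(n)$. The inductive step applies Lemma~\ref{lem:td-separator} to $S_k$ to obtain a separator bag whose removal splits $S_k$ into subtrees whose crossing graphs have disjoint vertex sets summing to at most $n_k$ and total edge count at least $m_k-2(w+1)(n-2)$; then Lemma~\ref{lem:densification}, applied to the disjoint union of these crossing graphs, selects as $S_{k+1}$ a component realizing the overall average ratio or better. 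This yields the two invariants
\[
n_{k+1}\le \tfrac12 n_k,\qquad d_{k+1}\ge d_k-\frac{2(w+1)(n-2)}{n_k}.
\]

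The linchpin is that $C_{S_k}$ is simple, so $d_k\le(n_k-1)/2$; equivalently $n_k\ge 2d_k$. This self-bounding inequality caps each loss term $1/n_i$ by $1/(2d_i)$, so after $k$ steps the cumulative density loss is at most $(w+1)(n-2)k/d_{\min}$. Choosing $k=\lceil\log_2(3n/d_0)\rceil+O(1)=O(1)$, the total loss is $O(wk)=o(d_0)$, so $d_k=\Omega(n)$, while $n_k\le 3n/2^k$ has been driven below $2d_k$, contradicting $d_k\le(n_k-1)/2$. Hence $w=\Omega(n)$.

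The main obstacle I anticipate is the worry that some $n_i$ might collapse to something far smaller than $n_i/2$ in a single step, inflating $1/n_i$ and potentially erasing the remaining density in a single blow. The inequality $n_i\ge 2d_i$ is precisely what rules this out: while $d_i$ remains large, $n_i$ is forced to remain proportionally large, so the descent terminates in a constant number of rounds with the cumulative loss tightly controlled.
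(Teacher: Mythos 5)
Your proposal is correct and follows essentially the same route as the paper's proof: the same iterative alternation of Lemma~\ref{lem:td-separator} and Lemma~\ref{lem:densification} starting from the crossing-count bound of Lemma~\ref{lem:cr-K-3-n}, terminating after $O(1)$ rounds in a contradiction with the maximum edge count of a simple graph. Your bookkeeping via the ratio $d_k=m_k/n_k$ and the bound $d_k\le(n_k-1)/2$ is just an equivalent reformulation of the paper's ``density exceeds one'' contradiction, tracked slightly more explicitly.
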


\begin{proof}
Let $D$ be an arbitrary planarization of $K_{3,n}$, and let $T$ be a minimum-width tree-decomposition of $D$. Let $\epsilon>0$ be a constant to be determined later.
We will show that $T$ either has width at least $\epsilon n$, or it has a subtree $S$ whose crossing graph $C_S$ has density strictly greater than one. Since no graph (without repeated edges) can have density so high, the only possibility is that $T$ has width at least $\epsilon n=\Omega(n)$.

To find $S$, for drawings whose width is at most $\epsilon n$, begin with $S=T$. Then, repeatedly use \autoref{lem:td-separator} to partition the current choice of subtree $S$ into smaller subtrees, and then use \autoref{lem:densification} to find one of these subtrees that is dense.
Each such step reduces the number of vertices in $C_S$ by at least a factor of two,
while also reducing the number of edges by approximately the same reduction factor
(approximately because of the $O(\epsilon n^2)$ edges of the crossing graph that are eliminated by the application of \autoref{lem:td-separator}).
Therefore, each step increases the density of $C_S$ by a factor of $2-O(\epsilon)$.
When $S=T$, the density is at least $1/18-o(1)$, so after at most five steps the density is $(32-O(\epsilon))(1/18-o(1))$.

To complete the argument, we need only choose $\epsilon$ to be small enough so that this expression, $(32-O(\epsilon))(1/18-o(1))$, has a value exceeding one.
\end{proof}

\begin{corollary}
\label{cor:all-params-linear}
For every planarization of $K_{3,n}$, and every parameter in $\{$pathwidth, cutwidth, bandwidth, branchwidth, carving width, tree-depth, clique-width$\}$, the value of the parameter on the planarization is $\Omega(n)$. Therefore, there exists a family of graphs for which each of these parameters is bounded but for each each planarization has linear parameter value.
\end{corollary}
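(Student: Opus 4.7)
The plan is to reduce each of the seven listed parameters back to treewidth so that \autoref{thm:planarize-K-3-n} does all the real work. First I would handle the parameters that lie above treewidth up to constant factors by a purely inequality-based argument. Pathwidth is at least treewidth by the definition of treewidth; branchwidth is within a constant factor of treewidth (Robertson--Seymour: treewidth plus one is at most three-halves of the branchwidth); tree-depth dominates pathwidth; and both cutwidth and bandwidth dominate pathwidth, because a linear arrangement realizing either parameter has vertex separation number at most its cutwidth, respectively bandwidth (each edge crossing a cut contributes at most one left-side boundary vertex, and at most ``bandwidth'' many positions lie close enough to a cut to send an edge across it). Each of these five inequalities, combined with \autoref{thm:planarize-K-3-n}, yields an $\Omega(n)$ lower bound on the corresponding parameter of every planarization.

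Next I would handle carving width by a direct, elementary argument rather than going through treewidth. In any carving decomposition, each vertex $v$ of $G$ appears as a leaf of the decomposition tree, and the edge of the tree incident to $v$ produces a cut whose $G$-edges are exactly the edges incident to~$v$, so the carving width is at least the maximum degree. Since subdividing an edge preserves the degrees of its endpoints, the three vertices $a,b,c$ of $K_{3,n}$ retain degree $n$ in every planarization, giving carving width at least $n$.

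The main obstacle is clique-width: treewidth can be arbitrarily large in graphs of bounded clique-width (cliques are an example), so we cannot reduce clique-width to treewidth for general graphs. The way around this is to use planarity. A theorem of Gurski and Wanke states that on any class of graphs excluding some fixed $K_{t,t}$ as a subgraph, treewidth is bounded linearly in clique-width. Planarizations of $K_{3,n}$ are planar and so exclude $K_{3,3}$ as a subgraph; hence their clique-width is at least a constant times their treewidth, which is $\Omega(n)$ by \autoref{thm:planarize-K-3-n}. Correctly locating and applying this external bound is the only nonroutine step in the proof.

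With the first sentence established for all seven parameters, the second sentence follows by taking $K_{3,n}$ itself as the family: its treewidth is $3$ (see \autoref{fig:pd-K-3-5}), so its pathwidth, branchwidth, tree-depth, and clique-width are all bounded by constants, while every planarization has these parameters in $\Omega(n)$. The family $K_{3,n}$ does not itself exhibit bounded cutwidth, bandwidth, or carving width (those are already $\Omega(n)$ in $K_{3,n}$), but for those three parameters the first sentence of the corollary is all that is needed in the sequel.
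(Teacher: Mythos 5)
Your proposal is correct and follows essentially the same route as the paper: reduce the first six parameters to treewidth via standard inequalities, and handle clique-width via the Gurski--Wanke bound for graphs excluding a $K_{t,t}$ subgraph, using planarity of the planarization to exclude $K_{3,3}$. The only (harmless) deviation is your direct maximum-degree argument for carving width, where the paper instead routes through the branchwidth--carving width inequality; you also rightly observe that the corollary's second sentence only genuinely applies to the parameters that are bounded on $K_{3,n}$ itself, a caveat the paper acknowledges only later in the text.
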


\begin{proof}
All of these parameters except clique-width are bounded from below by a linear function of the treewidth.

\begin{figure}[t]
\centering\includegraphics[scale=0.5]{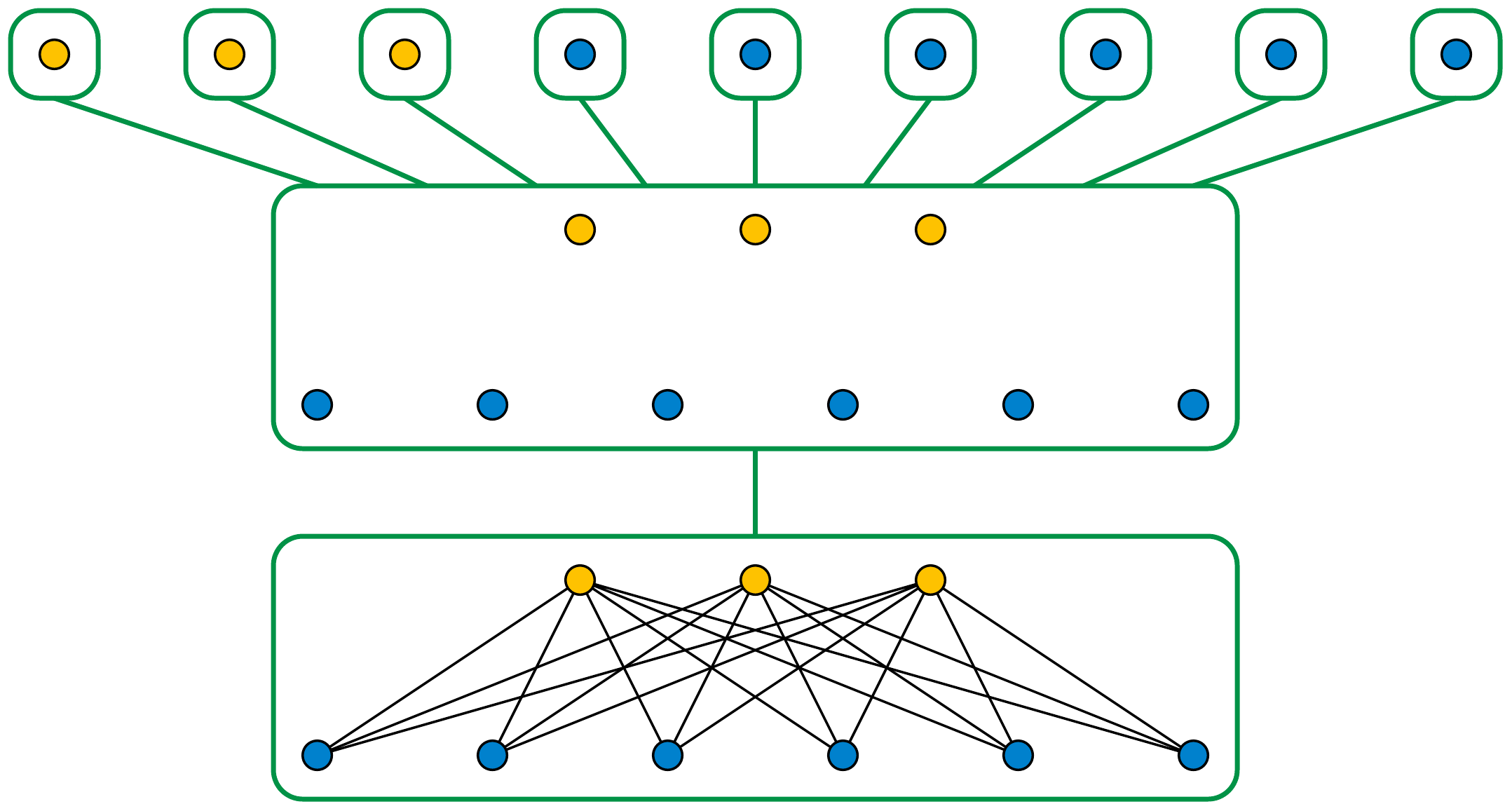}
\caption{Clique-width 2 construction of $K_{3,6}$ by a disjoint union of colored single vertices, followed by an operation that adds an edge between each bichromatic pair of vertices.}
\label{fig:kw-k-3-6}
\end{figure}

As with any complete bipartite graph, the clique-width of $K_{3,n-3}$ is two: it can be constructed from a  disjoint union of single vertices of two colors, by adding edges between all bichromatic pairs of vertices (\autoref{fig:kw-k-3-6}). The $\Omega(n)$ lower bound on clique-width follows from the facts that (as a planar graph) any planarization has no $K_{3,3}$ subgraph and that, for graphs with no $K_{t,t}$ subgraph, the treewidth is upper-bounded by a constant factor (depending on~$t$) times the clique-width~\cite{GurWan-WG-00}. Equivalently, the clique-width of any planarization is lower-bounded
by a constant times its treewidth, which by  \autoref{thm:planarize-K-3-n} is $\Omega(n)$.
\end{proof}

\section{Cutwidth and bounded-degree pathwidth}

Cutwidth behaves particularly well under planarization:

\begin{theorem}
\label{thm:cutwidth}
Let $G$ be a graph with $n$ vertices and $m$ edges, of cutwidth $w$. Then $G$ has a planarization
with $O(n+wm)$ vertices, of cutwidth at most~$w$.
\end{theorem}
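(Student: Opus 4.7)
The plan is to fix a linear arrangement $v_1,\ldots,v_n$ of $G$ attaining cutwidth $w$, place each $v_i$ at the point $(i,0)$, draw each edge $(v_i,v_j)$ as a semicircular arc above the $x$-axis, and planarize by inserting a dummy vertex at each arc--arc crossing. After an arbitrarily small perturbation of the vertex positions we may assume all crossings are simple, lie strictly above the axis, and have pairwise distinct $x$-coordinates, so the resulting planarization $D$ is well defined and its vertices have a canonical left-to-right order.

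The heart of the argument is bounding the number of dummy vertices. Two arcs cross if and only if their defining intervals interleave, so I would charge the crossings on a fixed edge $e=(v_i,v_j)$ with $i<j$ as follows. Classify any edge interleaving $e$ as \emph{left-type} if one endpoint lies in $\{v_1,\ldots,v_{i-1}\}$ and the other in $\{v_{i+1},\ldots,v_{j-1}\}$, and \emph{right-type} in the symmetric case; every interleaving edge is of one of these two types. Each left-type edge crosses the prefix/suffix cut between $v_i$ and $v_{i+1}$, as does $e$, so the cutwidth assumption gives at most $w-1$ of them; symmetrically the cut between $v_{j-1}$ and $v_j$ bounds the right-type edges by $w-1$. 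Hence $e$ is involved in at most $2(w-1)$ crossings, the total crossing count is at most $(w-1)m$, and $D$ has $n+O(wm)$ vertices.

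For the cutwidth of $D$, order all of its vertices (original and dummy) by $x$-coordinate. Any prefix/suffix cut of this arrangement corresponds to a vertical line $x=x_c$ falling strictly between two consecutive vertices of $D$, and in particular strictly between some original $v_k$ and $v_{k+1}$. The edges of $D$ crossing this cut are exactly the subdivided pieces of those arcs that pass over $x_c$, one piece per arc; their number therefore equals the number of edges of $G$ active at the original cut between $v_k$ and $v_{k+1}$, which is at most $w$. So the cutwidth of $D$ is at most $w$, as required. I expect no real obstacle beyond the crossings count above; the only delicate bookkeeping is the general-position assumption on the arc drawing, which is used both to define $D$ cleanly and to identify cuts of $D$ with vertical lines in the drawing.
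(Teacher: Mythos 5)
Your proof is correct and follows essentially the same route as the paper: the paper lifts the cutwidth-optimal linear arrangement to a convex curve and draws straight chords, which has exactly the same crossing structure (edges cross iff their intervals interleave) as your semicircular arc diagram, and both arguments then bound the crossings on an edge $e=(v_i,v_j)$ by the edges cut at the two endpoints' cuts and read off the cutwidth of the planarization from vertical lines. Your accounting of the $2(w-1)$ crossings per edge is a slightly more explicit version of the paper's ``$O(w)$ edges cross exactly one of the two vertical lines through $u$ and $v$.''
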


\begin{proof}
Consider a linear arrangement of $G$ with edge separation number $w$, and use the positions in this arrangement as $x$-coordinates for the vertices. Assign the vertices $y$-coordinates that place them into convex and general position, draw the edges of $G$ as straight line segments between the resulting points, and planarize the drawing by replacing each crossing by a vertex. Here, by ``general position'' we mean that no two points have the same $x$-coordinate,
no five points form a pentagon in which two crossing points and a vertex have the same $x$-coordinate,
no six points form a hexagon with three coincident diagonals,
and no eight points form an octagon in which the crossing points of two pairs of diagonals have the same $x$-coordinate. This will all be true of a rotation by a sufficiently small but nonzero angle of any convex placement. In the resulting drawing, there can be no intersections of vertices or edges other than incidences and simple crossings, and no two vertices or crossing points can have the same $x$-coordinate. An example is shown in \autoref{fig:cw-K-3-4}.

We use the ordering by $x$-coordinates of the planarization as a linear arrangement of the planarization. The edge intersection number is the maximum number of edges in the drawing that can be cut by any vertical line, unchanged between $G$ and its planarization.

\begin{figure}[t]
\centering\includegraphics[scale=0.5]{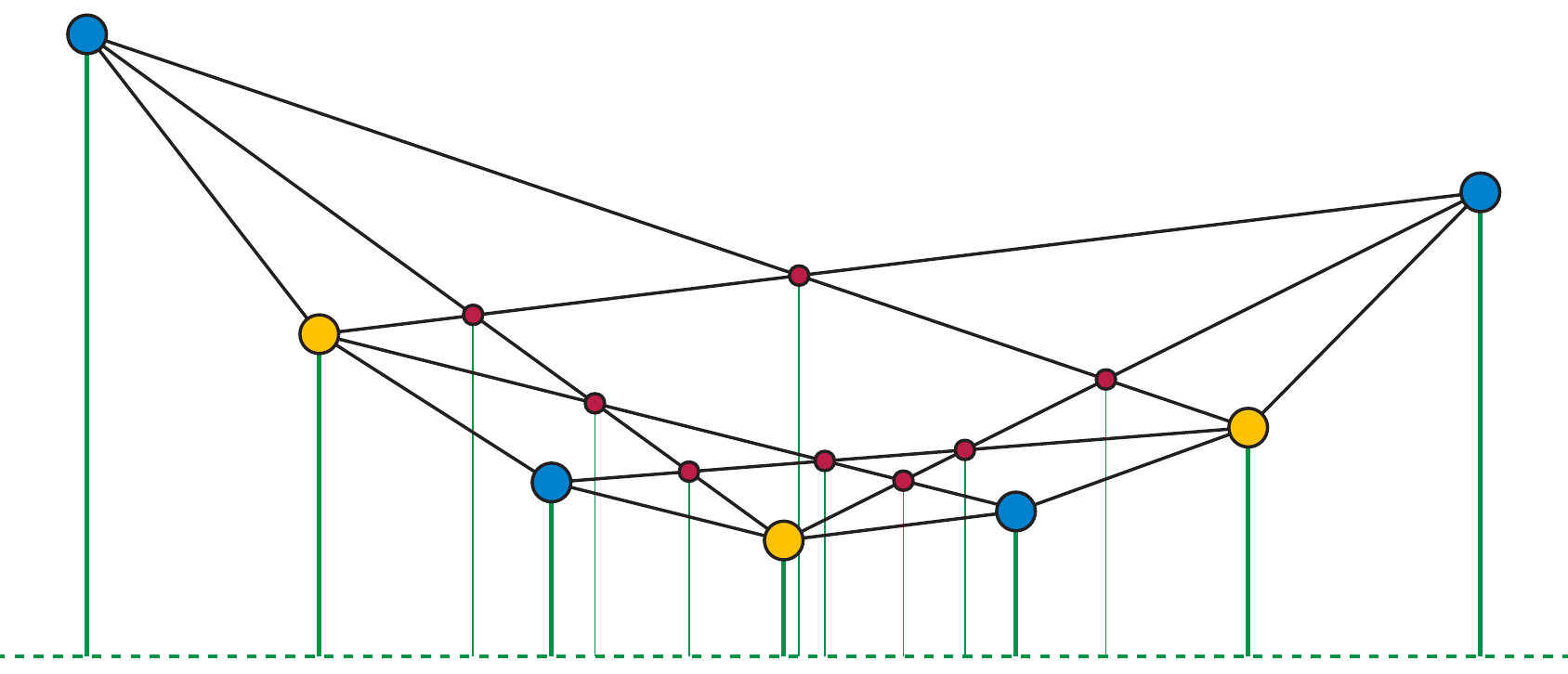}
\caption{Planarizing a graph of low cutwidth (here $K_{3,4}$, drawn with edge separation number six) by lifting its linear arrangement to a convex curve.}
\label{fig:cw-K-3-4}
\end{figure}

Because of the convex position of the vertices of $G$, each edge $(u,v)$ of $G$ can only be crossed by other edges that cross exactly one of the two vertical lines through $u$ and $v$; there are $O(w)$ such edges, so the number of crossings per edge is $O(w)$ and the total number of crossings is $O(wm)$.
\end{proof}

The lower bound of \autoref{cor:all-params-linear} does not contradict \autoref{thm:cutwidth} because $K_{3,n}$ does not have bounded cutwidth. Its cutwidth is at least $3\lceil n/2\rceil$,
obtained in any linear arrangement at the cut between the first $\lceil n/2\rceil$ vertices on the $n$-vertex side of the bipartition (together with any vertices from the other side that are mixed among them) and the remaining vertices of the graph.
For instance, the drawing of $K_{3,4}$ in \autoref{fig:cw-K-3-4} achieves the optimal cutwidth of six for this graph.
An example showing that the planarization size bound is tight is given by a disjoint union of $O(n/w)$ bounded-degree expander graphs, each having $O(w)$ vertices and crossing number $\Theta(w^2)$.

\begin{corollary}
Let $G$ be a graph with bounded pathwidth and bounded maximum degree. Then $G$ has a planarization with linear size and bounded pathwidth.
\end{corollary}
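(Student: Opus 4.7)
The plan is to reduce this corollary to Theorem~\ref{thm:cutwidth} via the standard relationship between cutwidth and pathwidth for bounded-degree graphs. Specifically, I would first establish the inequality $\operatorname{cutwidth}(H)\le \Delta(H)\cdot\bigl(\operatorname{pathwidth}(H)+1\bigr)$ for every graph $H$. Fixing a linear arrangement of $G$ that realizes the vertex separation number $p=\operatorname{pathwidth}(G)$, any prefix--suffix cut has at most $p+1$ prefix vertices that are incident to a suffix vertex; each such vertex contributes at most $\Delta$ edges across the cut, for a total of at most $\Delta(p+1)$ cut edges. Hence $G$ has bounded cutwidth $w\le\Delta(p+1)$.

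With $G$ of bounded cutwidth, I would apply Theorem~\ref{thm:cutwidth} to obtain a planarization $D$ with $O(n+wm)$ vertices and cutwidth at most $w$. The bounded-degree assumption gives $m=O(n)$, and $w$ is a bounded constant, so $|V(D)|=O(n)$, linear in the size of $G$ as required.

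Finally, to bound the pathwidth of $D$ I would invoke the easy converse inequality $\operatorname{pathwidth}(H)\le\operatorname{cutwidth}(H)$: in the optimal linear arrangement of $D$, each prefix--suffix cut is witnessed by the at most $w$ prefix endpoints of the edges crossing the cut, which yields a path decomposition of $D$ of width at most $w$. Thus $\operatorname{pathwidth}(D)\le w$, which is bounded. The only substantive step is the pair of linear-arrangement inequalities between pathwidth and cutwidth; these are standard folklore and I do not anticipate any real obstacle, with the rest of the argument a direct chaining of the preceding theorem. It is worth noting in passing that no control on the maximum degree of $D$ itself is needed, although this is automatic since the dummy crossing vertices have degree~$4$.
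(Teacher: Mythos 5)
Your proposal is correct and follows essentially the same route as the paper: reduce to \autoref{thm:cutwidth} via the two standard linear-arrangement inequalities, cutwidth at most degree times pathwidth (the paper cites Chung and Seymour for this) and pathwidth at most cutwidth. The only differences are cosmetic—you prove the inequalities directly rather than citing them, and you make the linear-size accounting explicit—so nothing further is needed.
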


\begin{proof}
If a graph has pathwidth $w$ and maximum degree $d$, it has cutwidth at most $dw$~\cite{ChuSey-DM-89}, and so does its planarization (\autoref{thm:cutwidth}). Because the planarization has cutwidth at most $dw$, it also has pathwidth at most $dw$, because the vertex separation number of any linear arrangement is at most equal to the edge separation number (with equality when the separation number is achieved by a matching).
\end{proof}

\section{Bandwidth}

The same construction used for planarizing graphs with low cutwidth also works for graphs of low bandwidth.

\begin{theorem}
Let $G$ be a graph with $n$ vertices and $m$ edges, of bandwidth $w$. Then $G$ has a planarization
with $O(n+w^2m)$ vertices, whose bandwidth is $O(w^4)$.
\end{theorem}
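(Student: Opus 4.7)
The plan is to apply the same convex-position construction as in \autoref{thm:cutwidth}, but start from a bandwidth-$w$ linear arrangement of $G$ rather than a cutwidth-optimal one. The positions of this arrangement serve as $x$-coordinates, the vertices are lifted onto a convex curve in sufficiently general position, and the resulting drawing is planarized; its vertices and crossing points are then ordered left-to-right by $x$-coordinate, and that ordering is the candidate linear arrangement for the planarization.

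The key lemma I would establish is a \emph{window bound}: because bandwidth~$w$ forces every edge of $G$ to span at most $w+1$ consecutive positions, any edge whose $x$-range overlaps the $x$-range of a fixed edge $e=(u,v)$ must have both endpoints in a window of at most $3w+1$ consecutive positions surrounding the positions of $u$ and $v$. Combined with the fact that bandwidth~$w$ bounds the degree of every vertex by~$2w$, this yields at most $O(w^2)$ edges of $G$ in the window.

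Both conclusions then follow by counting. For the size bound, each edge of $G$ has at most $O(w^2)$ crossings (one per overlapping competitor), so the drawing has $O(w^2 m)$ crossings in total and the planarization has $O(n+w^2m)$ vertices. For the bandwidth bound, consider an arbitrary planarization edge~$e'$: it is a subsegment of some original edge $e=(u,v)$, so its two endpoints lie in $[x(u),x(v)]$, and it suffices to count how many planarization vertices have $x$-coordinate in that interval. At most $w+1$ of those are original vertices of $G$, and every crossing in the interval arises from two edges both overlapping $[x(u),x(v)]$; the window bound gives $O(w^2)$ such edges and hence at most $\binom{O(w^2)}{2}=O(w^4)$ crossings.

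The only real technical obstacle is pinning down the window argument cleanly: one must verify that an edge whose $x$-range overlaps $[x(u),x(v)]$ really cannot reach into that interval from outside the $3w+1$-position window, which relies crucially on the uniform edge-length bound provided by bandwidth (and would fail for a merely pathwidth-based arrangement). Once this is in place, both the $O(n+w^2m)$ size bound and the $O(w^4)$ bandwidth bound follow directly from the counting above.
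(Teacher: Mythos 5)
Your proposal is correct and follows essentially the same route as the paper: lift the bandwidth-$w$ arrangement to a convex curve as in the cutwidth construction, observe that only $O(w^2)$ edges can overlap the span of a given edge, and count $O(w^4)$ crossings within that span to bound the span of every planarization edge. Your ``window bound'' merely spells out the details behind the paper's one-line claim that the span of any edge contains $O(w^2)$ other edges and $O(w^4)$ crossings, so no further comparison is needed.
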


\begin{proof}
We lift a linear arrangement of $G$ with low span to a convex curve in the plane, as in the proof of 
\autoref{thm:cutwidth}. Within the span of any edge $e$ of $G$, there are $O(w^2)$ other edges and $O(w^4)$ crossings of those edges, so the span of $e$ in the planarization is $O(w^4)$. This bound applies also to the span of any segment of $e$ created by crossings with other vertices. Each edge may be crossed by $O(w^2)$ other edges, so the total number of dummy vertices added is $O(w^2m)$.
\end{proof}

It may be possible to reduce the bandwidth of the planarization by introducing artificial crossings to break up edges with long spans, but we have not pursued this approach as we do not believe it will lead to better graph drawings.

\section{Carving width and bounded-degree treewidth}

If a graph has low carving width, we can use its carving decomposition (a tree with the vertices at its leaves, internal degree three, and with few edges spanning the cut determined by each tree edge) to guide a drawing of the algorithm that leads to a planarization with low carving width.

It is helpful, for our construction, to relate carving width to cutwidth.

\begin{lemma}
\label{lem:cutwidth2carving}
If a graph $G$ has cutwidth $w$ and maximum degree $d$, then $G$ has carving width at most $\max(w,d)$.
\end{lemma}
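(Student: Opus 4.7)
The plan is to build a caterpillar-shaped carving decomposition directly from an optimal linear arrangement witnessing the cutwidth bound. Let $v_1,v_2,\ldots,v_n$ be the vertices of $G$ in the order given by a linear arrangement of edge separation number $w$. I will construct a tree $T$ consisting of a spine of internal nodes $p_1,p_2,\ldots,p_{n-2}$ joined consecutively by edges, together with leaves labeled by the vertices of $G$: attach $v_1$ and $v_2$ as leaves to $p_1$; attach $v_{i+1}$ as a single leaf to $p_i$ for each $2\le i\le n-3$; and attach $v_{n-1}$ and $v_n$ as leaves to $p_{n-2}$. Each internal node has degree three (two neighbors along the spine and one or two leaf children at the ends), so $T$ is a valid carving decomposition.

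Next I will verify the width bound by analyzing the two kinds of edges of $T$ separately. For a spine edge $p_ip_{i+1}$, removing it splits the leaves of $T$ into exactly the prefix $\{v_1,\ldots,v_{i+1}\}$ and the suffix $\{v_{i+2},\ldots,v_n\}$ of the linear arrangement; hence the edges of $G$ that cross this partition are precisely those counted by the corresponding prefix--suffix cut of the linear arrangement, whose number is at most the edge separation number~$w$. For a leaf edge joining some $v$ to its spine node, removing it isolates $\{v\}$ from the remaining vertices, so the number of $G$-edges crossing the partition equals $\deg_G(v)\le d$. The maximum over all tree edges is therefore at most $\max(w,d)$.

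The main work is really just packaging these two observations and handling the tiny cases $n\le 3$, where the carving width is trivially at most~$d$ (one can take any tree with $n$ leaves and no or one internal node). No estimates beyond the definitions of cutwidth and degree are needed, and the bijection between spine cuts of the caterpillar and prefix--suffix cuts of the linear arrangement is immediate from the way leaves are attached. Consequently I expect no serious obstacle; the only thing to check carefully is that the caterpillar genuinely has every internal vertex of degree exactly three, which the counting $n$ leaves versus $n-2$ internal spine nodes confirms.
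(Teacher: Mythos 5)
Your proposal is correct and is essentially the paper's own argument: both build a caterpillar carving decomposition whose leaves follow the optimal linear arrangement, bound the spine-edge cuts by the edge separation number $w$, and bound the leaf-edge cuts by the maximum degree $d$. The only difference is that you make the indexing and the degree-three check explicit, which the paper leaves implicit.
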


\begin{proof}
We form a carving decomposition of $G$ in the form of a caterpillar: a path with each path vertex having a single leaf connected to it (except for the ends of the path which have two connected leaves). The ordering of the leaves is given by a linear arrangement minimizing the edge separation number. Then the cuts of the carving decomposition that are determined by edges of the path are exactly the ones determining the edge separation number, $w$. The remaining cuts, determined by leaf edges of the tree, are crossed by the neighboring edges of each vertex, of which there are at most~$d$. An example of this construction can be seen in \autoref{fig:cw-K-3-4}: the dashed horizontal green line represents the path from which the carving decomposition is formed, the heavy vertical green lines correspond to the leaf edges of the carving decomposition of $K_{3,4}$, and the thin vertical green edges correspond to the leaf edges of the carving decomposition of a planarization of $K_{3,4}$.
\end{proof}

\begin{theorem}
\label{thm:carving}
If an $n$-vertex graph $G$ has carving width $w$, then $G$ has a planarization with $O(w^2 n)$ additional vertices that still has carving width at most~$w$.
\end{theorem}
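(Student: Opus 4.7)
The plan is to adapt the convex-curve construction of \autoref{thm:cutwidth} so that the carving decomposition tree $T$ plays the role of the linear arrangement. Fix a carving decomposition $T$ of $G$ of width $w$; embed $T$ in the plane as a planar tree with its $n$ leaves lying on the boundary of a disk in the cyclic order induced by the embedding, and place each vertex of $G$ at its corresponding leaf. Route each edge $uv \in E(G)$ as a narrow tube that follows the unique $u$--$v$ path of $T$; in the interior of any $T$-edge the at most $w$ tubes passing through it run as parallel non-crossing strands. All crossings of the drawing are thereby confined to small ``switch boxes,'' one around each internal node of $T$, where the bundles of strands must be permuted among the three incident $T$-edges.

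At an internal node $x$ of $T$, the three incident $T$-edges have cuts summing to at most $3w$; since each strand through the switch box contributes to two of these cuts, at most $3w/2$ strands pass through. Drawing the switch box in general convex position makes each pair of strands cross at most once, giving $O(w^2)$ dummies per switch box and $O(w^2 n)$ dummies in total, as claimed.

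To get a carving decomposition $T'$ of the planarization $D$, I would refine $T$ by replacing each internal node $x$ with a local carving tree $T'_x$ whose leaves are the dummies in the switch box at $x$ and which has three ``ports'' where the three subtrees of $x$ reattach. An internal edge of $T'$ inherited from $T$ produces a cut of $D$ that is a lift of a cut of $G$ of width $\le w$: each original $G$-edge crossing that cut is subdivided into a path in $D$ that contributes exactly one planarization edge to the lifted cut, so this cut again has width $\le w$. The harder step is to design each $T'_x$ so that its local cuts also have width $\le w$.

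This local bound is the main obstacle. A naive sweep-line carving of a switch box can sever all $\le 3w/2$ strands at once, which exceeds $w$. I would instead exploit the three-sided geometry of the switch box: draw it as a convex region whose boundary has three arcs, one per incident $T$-edge, and recursively bisect by chords. The first chord is placed so as to separate one of the three bundles from the other two; it cuts exactly the strands of a single $T$-edge at $x$ and so has width $\le w$. Each of the two resulting subregions has at most two relevant sides with $\le w$ strand ports per side, and can be further bisected by chords whose crossing sets again lift cuts of $G$ of width $\le w$. Because every chord is crossed by any strand at most once, each such cut of $D$ contains exactly one planarization edge per crossed strand, so every local cut of $T'_x$ has width $\le w$. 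Combining the macro and local bounds, every cut of $T'$ has width $\le w$, so the planarization has carving width at most~$w$.
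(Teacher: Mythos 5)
Your global setup (routing the edges of $G$ along the carving tree $T$, with at most $w$ parallel strands per $T$-edge, so that every cut inherited from $T$ lifts to a cut of the planarization of width at most $w$) matches the paper's, and your crossing count of $O(w^2)$ per node and $O(w^2n)$ overall is fine. The gap is exactly where you say the ``harder step'' is, and your fix for it does not go through. By forcing the strands to run in parallel through every $T$-edge, you push \emph{all} crossings into the switch boxes at internal nodes, where up to $3w/2$ strands are present. Your recursive chord bisection is then justified by the claim that every chord's crossing set ``lifts a cut of $G$ of width $\le w$,'' but at a node $x$ there are only three cuts of $G$ available (one per incident $T$-edge), while the local carving tree $T'_x$ must contain $\Theta(w^2)$ internal edges, one per dummy vertex it separates off; almost none of these can lift a cut of $G$. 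Concretely, after your first chord peels off bundle $a$, the remaining region still has three sides ($b$, $c$, and the chord, which now carries the $w_a$ ports of the $a$-strands) and still contains all $3w/2$ strands and essentially all of the $\Theta(w^2)$ crossings, so the subproblem is isomorphic to the original and no progress has been made. Eventually some cut must split those crossings into two large parts, and since they lie on $3w/2$ mutually crossing strands there is no reason such a cut meets only $w$ of them; a simple counting argument already forces any balanced bipartition of the crossings to cut $\Omega(w)$ strands, and with $3w/2$ strands in play the achievable constant may well exceed $1$, which would defeat the stated bound of ``carving width at most $w$'' (as opposed to $O(w)$).

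The paper avoids this entirely by making the opposite choice of where to put the crossings: at each end of each thickened $T$-edge the strands are sorted by the position of their destination leaf in the planar embedding of $T$, which makes the switch boxes at internal nodes (and at leaf edges) crossing-free, and confines all crossings to the interiors of the $n-3$ internal $T$-edges, each traversed by only $w$ monotone curves (hence at most $\binom{w}{2}$ crossings each). A topological sweep of each such rectangle then linearly orders its crossing points with edge separation number at most $w$, and \autoref{lem:cutwidth2carving} converts that linear order into a caterpillar carving decomposition spliced into $T$ in place of the edge, giving width $w$ everywhere. If you want to salvage your switch-box variant, you would need to re-sort each bundle before it enters a node so that the node itself is crossing-free---but that re-sorting is precisely the permutation the paper realizes inside the $T$-edges, so you would be reproducing its construction.
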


\begin{figure}[t]
\centering\includegraphics[scale=0.5]{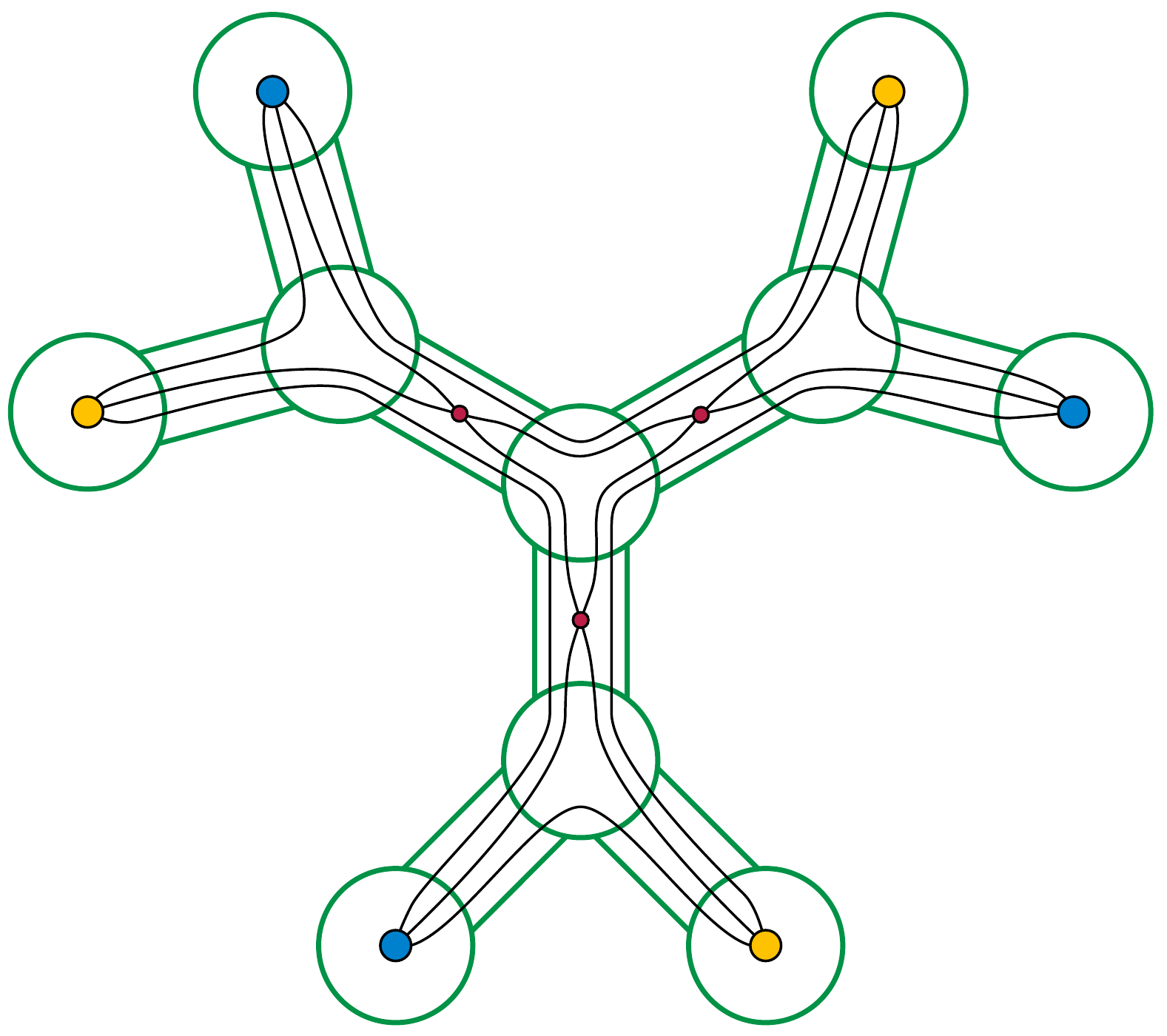}
\caption{Using a carving decomposition of $K_{3,3}$ to guide a planarization.}
\label{fig:carve-K-3-3}
\end{figure}

\begin{proof}
Let $T$ be the tree of a carving decomposition of $G$ with width $w$. Draw $T$ without crossings in the plane, with straight-line edges, and thicken the vertices of $T$ to disks and the edges of $T$ to rectangles without introducing any additional self-intersections of the drawing.
Place each vertex of $G$ in the disk of the corresponding leaf vertex of $T$. Route each edge of $G$ as a curve through the rectangles and disks connecting its endpoints, so that within each rectangle it forms a monotone curve (with respect to the orientation of the rectangle) crossing at most once each other edge routed within the same rectangle, and so that, at each end of each rectangle, the curves are sorted by the ordering of their destination leaves in the planar embedding of $T$. With this sorted ordering, there need not be any crossings within the disks representing internal vertices of $T$, nor in the rectangles representing leaf edges of~$T$ (\autoref{fig:carve-K-3-3}). The $n-3$ remaining edges of $T$ each contain at most $\tbinom{w}{2}$ crossings.  So the total number of crossings is at most $(n-3)\tbinom{w}{2}=O(w^2n)$.

This drawing cannot yet be recognized as a carving decomposition of a planarization of $G$, because some of its vertices (the dummy vertices introduced at crossings) are now placed along the edges of $T$ rather than at leaves. However, by topologically sweeping the arrangement of monotone curves~\cite{EdeGui-JCSS-89} within each rectangle corresponding to an edge of~$T$,
we can arrange the crossing points within that rectangle into a linear sequence, such that the portion of the drawing within that rectangle has edge separation number at most $w$ for that sequence.
Applying \autoref{lem:cutwidth2carving} (replacing the edge of $T$ by a carving decomposition in the form of a caterpillar, with a leaf of the decomposition for each vertex added in the planarization to replace a crossing of $G$, and with the ordering of these leaves given by a topological sweep of the arrangement) produces a carving decomposition of the planarization with width~$w$, as required.
\end{proof}

We note that this planarization technique resembles the ``simple planarization'' method of Di Battista et al.~\cite{DiBDidMar-GD-01} for clustered graphs. In this respect, we may view the carving decomposition of $G$ as a clustering to be respected by the planarization.

In an appendix, we prove the following strengthening of \autoref{thm:carving}:

\begin{theorem}
\label{thm:stronger-carving}
If an $n$-vertex graph $G$ has carving width $w$, then $G$ has a planarization with $O(w^{3/2} n)$ additional vertices that still has carving width $O(w)$.
\end{theorem}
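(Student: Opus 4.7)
The plan is to sharpen the $O(w^2 n)$ dummy-vertex bound of Theorem~\ref{thm:carving} to $O(w^{3/2} n)$, at the cost of allowing only a constant-factor blow-up in carving width. Theorem~\ref{thm:carving} applies the uniform upper bound $\binom{w}{2}$ to the crossings inside every tree-edge rectangle; my plan is to replace this uniform estimate by a finer combination of (i)~a rebalancing of the carving tree so that its edges carry more evenly distributed cut sizes, and (ii)~a refined routing inside each rectangle that distributes the sorting work across adjacent rectangles via auxiliary dummy vertices.

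The starting setup is the one from Theorem~\ref{thm:carving}: take a carving decomposition $T$ of $G$ of width~$w$, drawn planarly with tree edges thickened to rectangles and internal nodes thickened to disks, and place each vertex of $G$ in the disk of the corresponding leaf. I would first rebalance $T$ to a cubic tree $T'$ of depth $O(\log n)$ in which each cut is still crossed by $O(w)$ edges of $G$, using a standard centroid-based rebalancing adapted to carving decompositions; this inflates carving width by only a constant. Next, for each tree edge $e$ of~$T'$ carrying $k_e\le w$ wires, I would subdivide the corresponding rectangle into $O(\sqrt{k_e})$ shorter sub-rectangles separated by rows of dummy ``checkpoint'' vertices, and realize the wire permutation through the rectangle as a sequence of partial permutations, one per sub-rectangle. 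Routing the wires through these checkpoint rows (rather than as a single batch of monotone curves) lets the neighbouring rectangles share the work of long-range reorderings, so that summed over~$T'$ the total crossing count is $O(w^{3/2} n)$ via a geometric-series estimate over the depths of $T'$ and the per-level edge count $O(n/2^d)$.

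The augmented carving decomposition of the planarization is then built from $T'$ exactly as in Lemma~\ref{lem:cutwidth2carving}: for each tree edge of $T'$, attach a caterpillar spine enumerating, in topological-sweep order, the checkpoint dummies and the crossing dummies that lie inside the associated rectangle. Every cut of this augmented tree is either inherited from $T'$, and then crosses at most $O(w)$ original edges of $G$ plus $O(\sqrt w)$ checkpoint-dummy edges, or internal to some caterpillar, and then crosses at most $O(\sqrt w)\cdot O(\sqrt w) = O(w)$ wires and dummy edges; in either case the width is $O(w)$. The main obstacle is justifying that the checkpoint-based routing really does amortise to $O(w^{3/2})$ crossings per tree edge of $T'$ on average, since within a single rectangle an adversarial permutation can still force $\Omega(k_e^2)$ crossings locally. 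Overcoming this requires exploiting the logarithmic depth of $T'$ together with the fact that any single edge of $G$ only traverses $O(\log n)$ rectangles, so its contribution to the combinatorial crossing count can be charged across all of them; making this telescoping precise, and verifying that it is compatible with the caterpillar-spine construction, is the delicate part of the argument, while the rest should follow by adapting Theorem~\ref{thm:carving} and Lemma~\ref{lem:cutwidth2carving} essentially verbatim.
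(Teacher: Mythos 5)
Your construction has a genuine gap, and it is exactly the one you flag yourself: the checkpoint subdivision cannot reduce the number of crossings inside a rectangle. The wires crossing a tree edge must be carried from one fixed cyclic order at one end to another at the other end, and any family of curves realizing a permutation $\pi$ of $k_e$ wires across a channel must contain at least one crossing for every inversion of $\pi$ --- this is a topological invariant, independent of how many checkpoint rows you insert or how you stage the partial permutations. An adversarial permutation therefore forces $\Omega(k_e^2)$ crossings in that single tree edge, and no amortization over neighbouring rectangles can recover them. The rebalancing step does not help either: a cubic tree on $n$ leaves has $\Theta(n)$ edges no matter what its depth is (the per-level counts you invoke sum to $\Theta(n)$), so a worst-case cost of $\Theta(w^2)$ per tree edge still totals $\Theta(w^2 n)$, and the alternative charge of ``each edge of $G$ traverses $O(\log n)$ rectangles, crossing at most $w$ wires in each'' gives $O(w m\log n)=O(w^2 n\log n)$, which is worse than the bound of \autoref{thm:carving} that you are trying to beat. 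Note also that the tight example (disjoint cliques on $\Theta(\sqrt{w})$ vertices) shows that some rectangles genuinely must carry $\Theta(w^2)$ crossings, so any correct proof must reduce the \emph{number} of expensive rectangles rather than the cost of each one.

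That is the idea your proposal is missing, and it is how the paper proceeds: apply Frederickson's restricted-partition technique to the carving tree $T$ with order $z=O(\sqrt{w})$, obtaining $O(n/\sqrt{w})$ connected clusters, each containing $O(\sqrt{w})$ leaves of $T$ and attached to the rest of $T$ by at most two tree edges. Each cluster is handled wholesale: its $O(\sqrt{w})$ vertices of $G$, together with up to two bundle vertices for the $O(w)$ edges entering or passing through the cluster, are placed on a circle in general position and all relevant edges are drawn as chords, giving $O(w^2)$ crossings per cluster and, by projecting the circle to a line and invoking \autoref{lem:cutwidth2carving}, carving width $O(w)$ for that piece. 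The contracted tree then has only $O(n/\sqrt{w})$ rectangles, each still costing at most $\binom{w}{2}$ crossings as in \autoref{thm:carving}, so both contributions total $O(w^2)\cdot O(n/\sqrt{w})=O(w^{3/2}n)$. Your caterpillar-spine assembly of the final carving decomposition is in the right spirit and matches the paper's use of \autoref{lem:cutwidth2carving}, but without the clustering step the crossing bound you need cannot be reached.
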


An example showing that \autoref{thm:stronger-carving} is tight is given by a cluster graph consisting of $O(n/\sqrt{w})$ disjoint cliques of size $O(\sqrt{w})$, each requiring $\Theta(w^2)$ crossings in any drawing.

\begin{corollary}
Let $G$ be a graph with bounded treewidth or branchwidth and bounded maximum degree. Then $G$ has a planarization with linear size and bounded treewidth and branchwidth.
\end{corollary}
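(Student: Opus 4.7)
The plan is to route the argument through carving width, using Theorem~\ref{thm:stronger-carving} as the engine and standard comparison inequalities between width parameters at both ends.

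First I would verify that the hypothesis forces $G$ to have bounded carving width. The Robertson--Seymour inequality $\operatorname{bw}(H) \le \operatorname{tw}(H) + 1 \le \lceil 3\operatorname{bw}(H)/2 \rceil$ makes bounded branchwidth and bounded treewidth equivalent hypotheses, so it suffices to start from a tree decomposition of $G$ of width $w$. A standard fact is that such a graph of maximum degree $\Delta$ has carving width $O(\Delta w)$: each cut determined by an edge of the decomposition tree is crossed by at most $(w+1)\Delta$ edges of $G$, since there are at most $w+1$ boundary vertices and each has at most $\Delta$ incident edges. This gives $\operatorname{cw}(G) = O(1)$ under the corollary's hypotheses. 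I would then apply Theorem~\ref{thm:stronger-carving} to obtain a planarization $P$ with $O(n)$ vertices and carving width $O(1)$. Crucially, $P$ itself still has bounded maximum degree, since every dummy vertex introduced at a crossing has degree exactly four and the degrees of the original vertices of $G$ are preserved.

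Finally, I would convert bounded carving width of $P$ back into bounded treewidth and branchwidth via the general inequalities $\operatorname{tw}(H) = O(\operatorname{cw}(H))$ and $\operatorname{bw}(H) = O(\operatorname{cw}(H))$, both of which hold without any degree assumption: from a carving decomposition of width $w$ one can read off a tree or branch decomposition whose bags or middle sets are the boundary vertex sets of the carving cuts (each of size at most $2w$). This yields $\operatorname{tw}(P) = \operatorname{bw}(P) = O(1)$, as required.

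The main subtlety is the first direction, in which the bounded-degree hypothesis on $G$ is essential: tree and branch decompositions organize the graph via partitions of vertices or edges by vertex-cardinality, whereas carving decompositions measure edges cut by a vertex partition, so passing between them requires the degree multiplier $\Delta$. Crucially, an attempt to substitute pathwidth or cutwidth for carving width as the intermediate parameter would fail here: the complete binary tree already has pathwidth $\Theta(\log n)$ despite having treewidth~$1$ and maximum degree~$3$, so no theorem of the form ``Theorem~\ref{thm:cutwidth} applied to the cutwidth of a bounded-degree bounded-treewidth graph'' could give what we want.
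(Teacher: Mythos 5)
Your proposal is correct and follows essentially the same route as the paper: both pass from bounded branchwidth/treewidth and bounded degree to bounded carving width (picking up the degree factor in that direction only), invoke the carving-width planarization theorem to get a linear-size planarization of bounded carving width, and then return to branchwidth and treewidth using the degree-free inequality bounding branchwidth by twice the carving width. The only cosmetic difference is that you cite Theorem~\ref{thm:stronger-carving} where the paper uses Theorem~\ref{thm:carving}; for bounded width both give linear size, so nothing changes.
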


\begin{proof}
Treewidth and branchwidth are always within a constant factor of each other~\cite{SeyTho-Comb-94} so we may concentrate on the results for branchwidth, and the corresponding results for treewidth will follow automatically.

A carving decomposition may be converted into a branch decomposition by replacing each leaf of the carving decomposition (representing a vertex of the given graph) with a subtree (representing edges adjacent to the given vertex), in such a way that each edge is represented at exactly one of its endpoints. This increases the width of the decomposition by at most a factor equal to the degree. In the other direction, a branch decomposition may be converted into a carving decomposition by replacing each leaf of the branch decomposition (representing an edge of the given graph) by a subtree of zero, one, or two leaves (representing endpoints of the edge) in such a way that each vertex is represented at exactly one of its incident edges.This increases the width of the decomposition by at most a factor of two. So, the carving width is at most the degree times the branchwidth, and at least half the branchwidth~\cite{NesThi-DAM-14}.

Therefore, if $G$ has bounded branchwidth and bounded maximum degree, it has bounded carving width, and \autoref{thm:carving} tells us that it has a planarization of linear size that also has bounded carving width. The same planarization also must have bounded branchwidth.
\end{proof}

\iffull
\bibliographystyle{unabuser}
\else
\bibliographystyle{splncs}
\fi
\bibliography{pwidth}

\iffull
\newpage\appendix

\section{Tight crossing bounds for carving width}

Recall that \autoref{thm:stronger-carving} states that, if an $n$-vertex graph $G$ has carving width $w$, then $G$ has a planarization with $O(w^{3/2} n)$ additional vertices that still has carving width $O(w)$. The example of the disjoint union of $O(n/\sqrt{w})$ cliques of $O(\sqrt{w})$ vertices shows that this is tight.

To prove \autoref{thm:stronger-carving}, we apply a tree clustering technique of Frederickson~\cite{Fre-SICOMP-97} to the carving decomposition of $G$.

Following Frederickson~\cite{Fre-SICOMP-97} we define a \emph{restricted partition of order $z$} of an unrooted binary tree $T$ (such as the tree of a carving decomposition) to be a partition of the vertices of $T$ into connected subtrees with the following properties:
\begin{itemize}
\item Each subtree of the partition contains at most $z$ vertices.
\item If a subtree of the partition has more than two edges connecting it to other subtrees, then it contains exactly one vertex.
\item If two subtrees of the partition are connected by an edge, then they cannot be merged into a single subtree while preserving the previous two properties.
\end{itemize}

Such a partition can be found easily by a greedy algorithm that repeatedly merges subtrees until no more merges are possible.

\begin{lemma}[Frederickson~\cite{Fre-SICOMP-97}]
For every unrooted binary tree $T$ with $n$ vertices, every $z$, and every restricted partition of $T$ of order $z$, there are at most $O(n/z)$ subtrees in the partition.
\end{lemma}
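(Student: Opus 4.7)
The plan is a counting argument on the quotient tree $T'$ whose vertices are the subtrees of the restricted partition and whose edges connect pairs of subtrees that are adjacent in $T$. Since $T$ is binary, $T'$ has maximum degree $3$, and by property~2 every vertex of $T'$ of degree~$3$ is a single-vertex subtree; I will call these the \emph{branching} subtrees and the remaining ones (of $T'$-degree at most~$2$) the \emph{non-branching} subtrees. From the degree-sum identity in a tree of maximum degree~$3$, the number $k_2$ of branching subtrees equals $\ell-2$, where $\ell$ is the number of leaves of $T'$, so bounding $k$ reduces to bounding the number of edges $k-1$ of $T'$.

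The central step is to partition the edges of $T'$ by which of properties~1 and~2 blocks the merger forbidden by property~3. I will call an edge \emph{structure-blocked} if its endpoints have $T'$-degrees summing to at least~$5$ (so that merging would produce a multi-vertex subtree of $T'$-degree at least~$3$, violating property~2), and \emph{size-blocked} otherwise; in the size-blocked case property~1 must be the obstacle, so the two subtree sizes sum to more than $z$. Summing $|U|+|V|$ over all size-blocked edges, each subtree $W$ contributes at most $\deg_{T'}(W)\cdot |W|\le 3|W|$ to the total, so the number of size-blocked edges is at most $3n/z$. Every structure-blocked edge has at least one branching endpoint, so the number of structure-blocked edges is at most $\sum_B\deg_{T'}(B)=3k_2$. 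Combining these gives $k\le 3n/z+3k_2+1$.

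The main obstacle is then to bound $k_2$ by $O(n/z)$, which I plan to do by bounding $\ell$. A leaf of $T'$ whose unique neighbor is branching lies on a size-blocked edge of endpoint degree-sum~$4$, and since the branching neighbor has size~$1$, the leaf subtree must have size at least~$z$; there are therefore at most $n/z$ such leaves. A leaf whose unique neighbor is non-branching lies in a component of size at least~$2$ in the subgraph of $T'$ induced by non-branching vertices; each such component is a path, and along it every pair of consecutive subtree sizes sums to more than~$z$, so a component of $p$ non-branching subtrees has total subtree size greater than $(p-1)z/2\ge z/2$. Each such component contributes at most two endpoints that are leaves of $T'$, and the total subtree size across all such components is at most $n$, so there are $O(n/z)$ leaves of this second type as well. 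Putting these bounds together yields $\ell=O(n/z)$, hence $k_2=O(n/z)$, and finally $k=O(n/z)$, as required.
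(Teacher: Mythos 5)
Your proof is correct and follows essentially the same route as the paper: both work in the contracted (quotient) tree, use the maximality condition to force adjacent mergeable subtrees to have combined size exceeding $z$, and exploit the relation between the leaves and the degree-three vertices of a binary tree to bound the total count. Your edge-classification bookkeeping (structure-blocked versus size-blocked) is somewhat more explicit than the paper's count of leaves and adjacent degree-two pairs, but the underlying argument is the same.
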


\begin{proof}
 If each subtree in a restricted partition is contracted into a single vertex, the result is again an unrooted tree with maximum degree three. Every leaf vertex of this contracted tree together with its parent must together have more than $z$ vertices, or else they could be merged to form a larger tree with at most $z$ vertices and at most two connecting edges to other subtrees. For the same reason, every pair of adjacent degree-two vertices in this contracted tree must together have more than $z$ vertices. Therefore, the contracted tree can only have $O(n/z)$ leaf vertices and $O(n/z)$ adjacent pairs of degree-two vertices, from which it follows that it has $O(n/z)$ vertices altogether.
 \end{proof}
 
To planarize $G$ with $O(w^{3/2} n)$ additional vertices and carving width $O(w)$, proving \autoref{thm:stronger-carving}, we first find a restricted partition of the carving decomposition $T$ of $G$, of order $O(\sqrt{w})$.

Each subtree $T_i$ of the restricted partition represents a subset of $O(\sqrt n)$ vertices of $G$, possibly having up to $2w$ edges connecting it to the rest of $G$ along the two edges of $T$ connecting this subtree to the rest of $T$. Let $V_i$ denote the subset of vertices of $G$ within subtree $T_i$, together with up to two dummy vertices representing the two edges of $T$ connecting $T_i$ to the rest of $T$. We planarize the subgraph of edges that enter or pass through $T_i$ by placing the vertices of $V_i$ onto a circle, but otherwise in general position, and by drawing each edge as a straight line segment between two points of this circle. Each of the $O(n/\sqrt{w})$ subtrees contributes $O(w^2)$ crossings from this drawing (the maximum number of crossings for a graph on $O(\sqrt{w})$ vertices drawn with straight-line edges on a circle). Projecting the circle onto a line gives a linear arrangement of the subgraph, and of its planarization, with edge separation number $O(w)$, so by \autoref{lem:cutwidth2carving} the carving width of this subgraph and its planarization is also $O(w)$.

Let $T'$ be the binary tree resulting from $T$ by contracting each $T_i$ into a point. As in \autoref{thm:carving} we draw $T'$ in the plane, replacing each of its vertices by a disk and replacing each of its edges by a rectangle. We place the drawing of the subgraph associated with each subtree $T_i$ into the corresponding disk of $T'$. We replace the (up to two) two dummy vertices representing connections from $T_i$  to other subtrees with a bundle of edges passing from the disk to an adjacent rectangle. As in \autoref{thm:carving} we route edges across each rectangle by monotone curves that cross each other at most once. There are $O(n/\sqrt{w})$ rectangles, each having $O(w^2)$ crossings and carving width at most $w$, so the contributions to the total number of crossings and the carving width from this part of the construction are also $O(w^{3/2}n)$ and $O(w)$ respectively.
\fi

\end{document}